\newtheorem{proposition}{Proposition}
\newtheorem{definition}{Definition}
\newcommand{\id}{\mathrm{d}}
\def\vec#1{#1}
\def\w{\omega}
\def\dbox{\text{dbox}}
\def\xbox{\text{xbox}}
\author[a]{Yang Zhang}
\affiliation[a]{Niels Bohr International Academy and Discovery Center, Niels Bohr
  Institute, \\University of Copenhagen, Blegdemsvej 17, DK-2100
  Copenhagen, Denmark}
\emailAdd{zhang@nbi.dk}
\title{Integrand-Level Reduction of Loop Amplitudes by Computational
  Algebraic Geometry Methods}
\abstract{We present an algorithm for the integrand-level reduction of
multi-loop amplitudes of renormalizable field theories, based on
computational algebraic geometry. This algorithm uses (1) the Gr\"obner basis method to determine the basis
for integrand-level reduction, (2) the primary decomposition of an ideal to classify
all inequivalent solutions of unitarity cuts. The resulting basis
and cut 
solutions can be used to reconstruct the integrand from unitarity
cuts, via polynomial fitting techniques. The basis determination part
of the algorithm has been implemented in the Mathematica
package, BasisDet. The primary decomposition part can
be readily carried out by algebraic geometry softwares, with the
output of the package BasisDet. The algorithm works in both $D=4$ and
$D=4-2\epsilon$ dimensions, and we present some two and three-loop
examples of applications of this
algorithm.
}
\begin{document}
\maketitle

\section{Introduction}
The study of higher-loop amplitudes for gauge theories is important for
both theoretical and phenomenological reasons. The data analysis of
Large Hadron Collider (LHC) physics requires great accuracy of the
standard-model cross sections computation. For many channels, not only
the next-to-leading order (NLO) amplitudes, but also the
next-to-next-to-leading order (NNLO) amplitudes are important in order
to control theoretical uncertainties.   

The traditional Feynman diagram approach for amplitude calculation
becomes very complicated in the higher-loop
cases. Integration-by-parts (IBP) identities were used to reduce the number
of integrals in loop diagrams \cite{Chetyrkin:1981qh}, and efficiently
implemented in Laporta algorithm \cite{Laporta:2001dd}. The method of Gr\"obner
basis was used to express Feynman integrals as a linear combination of
master integrals \cite{Tarasov:2004ks,Smirnov:2005ky, Smirnov:2008iw}.

New methods based
on unitarity \cite{Bern:1994zx,Bern:1994cg, Britto:2004nc} decompose loop amplitudes as the product of tree
amplitudes and greatly simplify the computation. There
is a particularly convenient method: integrand-level reduction (OPP
method)  \cite{Ossola:2006us}, which
decomposes the amplitude directly at the integrand level. OPP method can be used to 
automatically and efficiently calculate one-loop amplitudes with
multiple legs \cite{Ossola:2007ax, Ellis:2007br,
  Mastrolia:2010nb, Badger:2010nx, Hirschi:2011pa, Bevilacqua:2011xh}. 
The original OPP method can be generalized to $D=4-2\epsilon$ at one loop
\cite{Giele:2008ve, Ellis:2008ir, Ellis:2011cr}. The polynomial division method was
first used in the integrand-level reduction for one-loop amplitudes in
ref. \cite{Mastrolia:2012bu}.

The generalized unitarity method also
applies to two-loop amplitudes: 
Buchbinder and Cachazo calculated two-loop planar amplitudes for
$\mathcal N=4$ super-Yang-Mills \cite{Buchbinder:2005wp}, by generalized unitarity. Gluza, Kajda and Kosower
\cite{Gluza:2010ws} used
a Gr\"obner basis to find IBP relations without
double propagators, and then
determined the master integrals for two-loop planar diagrams. The IBP
relations can also be generated by linear algebra, without using
Gr\"obner basis \cite{Schabinger:2011dz}. With
these master integrals, Kosower
and Larsen \cite{Kosower:2011ty} applied maximal unitarity method for two-loop planar diagrams to
obtain the coefficients of master integrals from the products of tree
amplitudes. Furthermore, Larsen \cite{Larsen:2012sx} applied this method to
two-loop six-point amplitudes, using multidimensional contour integrals. The two-loop double-box diagram
maximal-cut solutions can be related to Riemann surfaces, whose
geometry uniquely defines the contour integrals \cite{CaronHuot:2012ab}.  

Alternatively, using integrand-level reduction, Mastrolia and Ossola
\cite{Mastrolia:2011pr} applied
OPP-like methods to study two-loop $\mathcal N=4$ super-Yang-Mills
amplitudes. Badger, Frellesvig and Zhang \cite{Badger:2012dp} then
used the Gram-matrix method to find the integrand basis systematically
 for two-loop
amplitudes of general renormalizable theories. They calculated the double-box and
crossed-box contributions to two-loop four-point $\mathcal N=0,1,2,4$
(super)-Yang-Mills amplitudes.

It is interesting to generalize and automate the integrand-level
reduction to amplitudes with more legs and more than two loops. The
main limitations in the previous integrand-level reductions are,
\begin{enumerate}
 \item The basis for integrand-level reduction grows quickly as the
   number of loops increases. At three-loop order and the beyond,
   the Gram-matrix method becomes very complicated and the
   integrand-level basis is hard to obtain.
\item It is difficult to find and classify {\it all inequivalent} unitarity cut
  solutions for complicated diagrams. It is necessary to find all 
  cut solutions, to reconstruct the integrand. However, for diagrams
  with many legs or more than two loops, the solutions become
  complicated. And often two solutions appear to be different, are but
  actually equivalent after reparametrization \cite{CaronHuot:2012ab}.
We have to remove
  this redundancy before reconstructing the integrand. 
\end{enumerate}
These difficulties come from the complexity of the  algebraic
system of cut equations. The ideal approach to deal with these
problems is {\it computational algebraic geometry}. In this paper, we reformulate these
two problems as classic mathematical problems and solve them by
powerful mathematical tools,
\begin{enumerate}
  \item 
Integrand-level basis is equivalent to the linear basis in
the quotient ring, of polynomials in irreducible scalar products
(ISPs) modulo the cut equations. 
Then the integrand basis can be
   generated automatically using the standard Gr\"obner basis and polynomial
    reduction methods \cite[Ch.~5]{MR2290010}. 
\item The collection of all cut solutions is 
 an {\it algebraic set}. The latter can be uniquely decomposed to a
finite number of {\it affine varieties}. Each variety is an independent
 solution of the unitarity cuts, and different varieties are not
 equivalent by reparametrization. In practice, this decomposition
  is automatically done by {\it primary decomposition of an
    ideal} \cite[Ch.~1]{MR0463157}. This classifies all inequivalent unitarity
  cut solutions. Furthermore, dimension theory in algebraic
  geometry \cite[Ch.~1]{MR0463157}  can determine the number of free parameters in each solution.
\end{enumerate}

We implement the first part of our algorithm in the Mathematica package BasisDet which can automatically generate
the integrand-level basis. It also provides a list of irreducible
scalar products (ISP)'s and the
ideal $I$ generated by the cut equations. The latter
information can be directly used by computational algebraic geometry
software, like Macaulay2 \cite{M2}, to carry out the second part of
the algorithm. Once the primary decomposition is done, we get all 
inequivalent solutions of the unitarity cuts. Furthermore, for each
solution, the software will find the number of free
parameters. 

The package BasisDet has been tested for $D=4$ and $D=4-2\epsilon$
one-loop box, triangle and bubble diagrams, $D=4$ two-loop four-point double-box,
crossed-box, pentagon-triangle diagrams, $D=4$ two-loop five-point
double-box diagram, pentagon-box diagram, and $D=4-2\epsilon$ two-loop four-point
diagram. It has also been tested in two-loop level diagrams beyond
maximal unitarity, for example, $D=4$ two-loop four-point box-triangle, sunset and
double-bubble diagrams. The output bases have been verified for all these cases.

We have  also used this algorithm to calculate $D=4$ three-loop triple-box basis, and
have verified that terms inside the basis are
linearly-independent on the unitarity cuts. We also successfully
carried out a primary decomposition on this
diagram to find all the inequivalent cut solutions. 

This paper is organized as follows. In section 2, we briefly review
the known integrand-level reduction for one and two-loop diagrams. The
limitation of previous approaches is also pointed out. In
section 3, our new algorithm is presented, and its validity is
mathematically proven. Then in section 4, several examples are
presented for one, two and three-loop diagrams. Finally, our
conclusion and discussion on future directions are provided in
section 5. The manual for the package BasisDet is given in Appendix A.  

The package BasisDet and examples are included in ancillary files of the arXiv
version of this paper. The package and its future updates can also be
downloaded from the website, \url{http://www.nbi.dk/~zhang/BasisDet.html}.

\section{Review of integrand-level reduction methods}
In this section, we briefly review integrand-level reduction for
one and two-loop amplitudes. (See \cite{Ellis:2011cr} for detailed
review of the one-loop integrand reduction.) 

\subsection{One-loop integrand-level reduction}
\begin{figure}[h]
  \begin{center}
    \includegraphics[width=8cm]{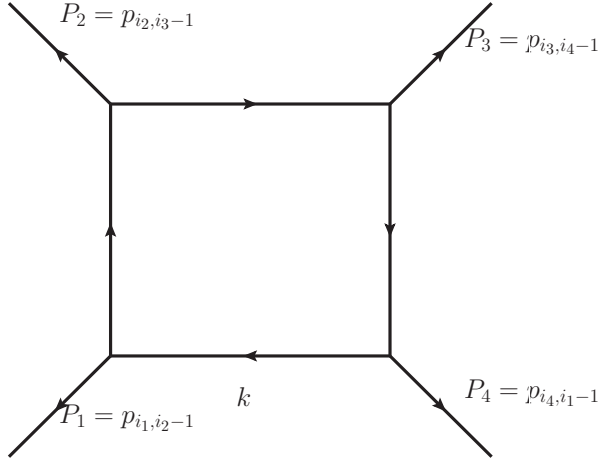}
  \end{center}
\caption{One-loop box diagram}
\label{Figure:box}
\end{figure}
Schematically, for $D=4$, an one-loop amplitude must be decomposed as \cite{Ossola:2006us},
\begin{gather}
  A_n^{(1)} =
  \int \frac{\id^4 \vec{k}}{(2 \pi)^{4/2}}
  \sum_{i_1=1}^{n-3}
  \sum_{i_2=i_1+1}^{n-2}
  \sum_{i_3=i_2+1}^{n-1}
  \sum_{i_4=i_3+1}^{n} 
  \frac{\Delta_{4,i_1i_2i_3i_4}(\vec{k})}{D_{i_1} D_{i_2} D_{i_3} D_{i_4} }\nonumber\\
  +\sum_{i_1=1}^{n-2}
  \sum_{i_2=i_1+1}^{n-1}
  \sum_{i_3=i_2+1}^{n} 
    \frac{\Delta_{3,i_1i_2i_3}(\vec{k})}{D_{i_1} D_{i_2} D_{i_3} }
  + \sum_{i_1=1}^{n-2}
  \sum_{i_2=i_1+2}^{n+i_1-2} 
    \frac{\Delta_{2,i_1i_2}(\vec{k})}{D_{i_1} D_{i_2} }
  \nonumber\\
  +\text{tadpoles, wave-function bubbles and rational terms},
  \label{eq:oneloopintegrand}
\end{gather}
where we define the propagators $D_{i_x} =
(\vec{k}-\vec{p}_{i_1,i_x-1})^2$, $\vec{p}_{i,j} =
\sum_{k=i}^j\vec{p}_k$ (figure \ref{Figure:box}) as the sum of external momenta such that
$\vec{p}_{i_1,i_1-1}=0$, and
have taken the restriction that all propagators are
massless. We must require that $\Delta_{4,i_1i_2i_3i_4}$ contain no term which is
proportional to $D_{i_x}$, $x=1,\ldots 4$, otherwise one of the
denominator in the integral is cancelled out. Similarly,
$\Delta_{3,i_1i_2i_3}$ must contain no term 
proportional to $D_{i_x}$, $x=1,\ldots 3$ and so on. 

Consider $\Delta_{4,i_1i_2i_3i_4}$ first. There exists a vector $\omega$ perpendicular to
$\vec{p}_{i_1,i_x-1}$, $x=2,3,4$. Of all the scalar products
involving loop momenta, only $k\cdot \omega$ is {\it not} a  polynomial
in denominators $D_{i_1},D_{i_2},D_{i_3}$ and $D_{i_4}$. We call such
scalar products {\it irreducible scalar products} (ISPs) and the
other scalars products {\it reducible scalar products}
(RSPs). $\Delta_{4,i_1i_2i_3i_4}$ should be a function of ISPs,
i.e. $(k\cdot \omega)$ only. 

Furthermore, we find that $\Delta_{4,i_1i_2i_3i_4}$ is at most linear
in $k\cdot \omega$,
\begin{equation}
  \Delta_{4,i_1i_2i_3i_4}=c_0+c_1 (k\cdot \omega).
  \label{eq:1lD4}
\end{equation}
$c_0$ and $c_1$ are constants independent of the loop momentum. Higher-order terms in
$(k\cdot \omega)$ are absent, because 
\begin{equation}
  (k\cdot \omega)^2 =\text{linear
combination of $D_{i_1}$, $D_{i_2}$, $D_{i_3}$ and $D_{i_4}$.}
\label{1lD4gram}
\end{equation}
The coefficients $c_0$ and $c_1$ can
be calculated from quadruple cuts, $D_{i_1}(k^{(s)})=D_{i_2}(k^{(s)})=D_{i_3}(k^{(s)})=D_{i_4}(k^{(s)})=0$,
where $s=1,2$ since there are two cut solutions. The
integrand at the two cut solutions determined the two coefficients $c_0$
and $c_1$. Note that although the term $c_1 (k\cdot \omega)$ 
integrates to zero, it is necessary to keep it for the triple-cut
calculation. We call the set $\{1,  (k\cdot \omega)\}$ the 
    integrand basis for $D=4$ one-loop quadruple cut and $(k \cdot
  \omega)$ the {\it spurious term}.

Similarly, $\Delta_{3,i_1i_2i_3}$ can be reconstructed from triple cuts. We
have two vectors $\omega_1$ and $\omega_2$, which are in perpendicular to
the external momenta. There are two ISPs, $k\cdot \omega_1$ and
$k\cdot \omega_2$. The
expansion over integrand basis reads, 
\begin{gather}
  \Delta_{3,i_1i_2i_3}(\vec{k}) = 
  c_{00}
  +c_{10}(\vec{k}\cdot \vec{\w}_1) 
  +c_{01}(\vec{k}\cdot \vec{\w}_2)
  +c_{11}(\vec{k}\cdot \vec{\w}_1) (\vec{k}\cdot \vec{\w}_2)\nonumber\\
  +c_{12}(\vec{k}\cdot \vec{\w}_1) (\vec{k}\cdot \vec{\w}_2)^2
  +c_{21}(\vec{k}\cdot \vec{\w}_1)^2 (\vec{k}\cdot \vec{\w}_2)
  +c_{20;02}\left( (\vec{k}\cdot \vec{\w}_1)^2 - (\vec{k}\cdot \vec{\w}_2)^2 \right).
  \label{eq:1lD3}
\end{gather}
The basis contains $7$ terms, of which $6$ are spurious. There are
two
cut solutions for the triple cut,
\begin{equation}
  \label{eq:18}
  D_{i_1}(k^{(s)}(\tau))=D_{i_2}(k^{(s)}(\tau))=D_{i_3}(k^{(s)}(\tau))=0,
\end{equation}
with $s=1,2$, but each of them contains one complex free parameter $\tau$.  The
original numerator of the integral at triple cuts, with all
$\Delta_{4,i_1 i_2 i_3 i_4}$ subtracted, becomes a Laurent series in
$\tau$.  The corresponding Laurent coefficients determine all the $7$
``$c$'' coefficients of eq. (\ref{eq:1lD3}).

The one-loop integrand-level reduction also works for $D=4-2\epsilon$
\cite{Ellis:2011cr}. The loop momenta contains both the four-dimensional part
and the extra-dimensional part \footnote{Throughout this paper, we use the scheme that all external momenta and polarization vectors have no $(-2\epsilon)$-dimensional components.},
\begin{equation}
  \label{eq:19}
  k=k^{[4]}+k^\perp,\quad (k^\perp)^2\equiv-\mu^2.
\end{equation}
For the quadruple cut, the basis has larger size than that of the $D=4$
case. Instead of (\ref{1lD4gram}), we have
\begin{equation}
  (k\cdot \omega)^2 -\mu^2=\text{linear
combination of $D_{i_1}$, $D_{i_2}$, $D_{i_3}$ and $D_{i_4}$.}
\label{1lD4m2egram}
\end{equation}
So we can remove either $(k\cdot \omega)^2$ or $\mu^2$ to obtain an
integrand basis. One convenient choice is 
\begin{equation}
  \Delta_{4,i_1i_2i_3i_4}^{4-2\epsilon}=c_0+c_1 (k\cdot \omega)+c_2
  \mu^2 +c_3 (k\cdot \omega) \mu^2 +c_4 \mu^4.
\label{eq:1lD42e}
\end{equation}
The $D=4-2\epsilon$ quadruple cut has only one solution. This
solution depends on one complex free parameter $\tau$. Note that
geometrically, this
solution is complex one-dimensional, and contains the two $D=4$ box
quadruple cut solutions (zero-dimensional) as two isolated points. The Taylor expansion in
$\tau$ of
the integrand, at the quadruple cut, determined the coefficients $c_0$,
$c_1$, $c_2$, $c_3$ and $c_4$. 

\subsection{Two-loop integrand-level reduction}
For the one-loop cases considered above, it is relatively easy to determine
the integrand basis and find the unitarity cut solutions. However, in
the two-loop cases, it is much harder to find the integrand basis and
the unitarity cut solutions are more complicated. 

Mastrolia and Ossola \cite{Mastrolia:2011pr}
applied the OPP-like method for two-loop $N=4$
super Yang-Mills amplitudes. Two-loop four-point amplitudes for general
renormalizable theories were calculated in \cite{Badger:2012dp}. To
show clearly new
features of two-loop integrand-level reduction, we review the
Gram-matrix method presented in ref. \cite{Badger:2012dp}. 

For example, consider the two-loop four-point planar diagram
(Figure.\ref{Figure:dbox}).  The integrand-level reduction reads,
\begin{equation}
  \label{eq:20}
   A_4^{[\dbox]}(1,2;3,4;) = 
  \int\int \frac{\id^4 \vec{k}}{(2 \pi)^{4/2}} \frac{\id^4 \vec{q}}{(2 \pi)^{4/2}}
  \Bigg(
  \frac{\Delta^{\dbox}_{7;12*34*}(\vec{k},\vec{q})}{D_1 D_2 \ldots D_7}
  \Bigg)+\ldots
\end{equation}
where $\ldots$ stands for the integrals with less than $7$
propagators. Our aim is to reconstruct the double box function
$\Delta^{\dbox}_{7;12*34*}$ from hepta-cuts (maximal cut for diagrams
with $7$ propagators). Again, there exists one vector $\omega$ which is perpendicular to all the external momenta. 
\begin{figure}[h]
  \begin{center}
    \includegraphics[width=8cm]{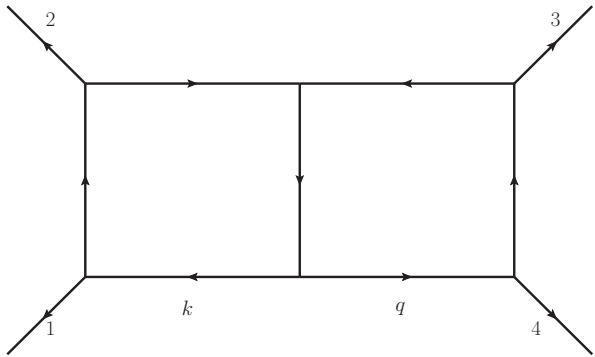}
  \end{center}
\caption{Four-point two-loop planar diagram}
\label{Figure:dbox}
\end{figure}
There are four ISPs: $(k\cdot p_4)$, $(q\cdot p_1)$, $(k\cdot
\omega)$ and  $(q\cdot \omega)$. The integrand basis consists of terms
with the form,
\begin{equation}
  \label{eq:21}
  (k\cdot p_4)^m (q\cdot p_1)^n (k\cdot \omega)^\alpha (q\cdot \omega)^\beta,
\end{equation}
where $m$, $n$, $\alpha$ and $\beta$ are non-negative integers. For
renormalizable theories, power counting requires that,
$m+n+\alpha+\beta\leq 6$, $m+\alpha \leq 4$ and $n+\beta \leq
4$. Furthermore, it is easy to see that $(k\cdot \omega)^2$, $(q \cdot
\omega)^2$ and $(k\cdot \omega)(q\cdot \omega)$ are linear
combinations of the seven denominators. Hence $\alpha\leq 1$, $\beta\leq
1$ and $\alpha\cdot \beta=0$. The above analysis is similar to that of
one-loop cases, and the integrand basis appears to contain $56$
terms. 

However, there are more constraints. For four dimension momenta, the
determinants of $5\times 5$ Gram matrices are zero,
\begin{eqnarray}
  \det G\left(
    \begin{array}{ccccc}
      1 & 2 & 4 & k & q\\
 1 & 2 & 4 & k & q
    \end{array}
\right)=0,\quad
  \det G\left(
    \begin{array}{ccccc}
      1 & 2 & 4 & k & q\\
 1 & 2 & 4 & \omega & k
    \end{array}
\right)=0,\quad
  \det G\left(
    \begin{array}{ccccc}
      1 & 2 & 4 & k & q\\
 1 & 2 & 4 & \omega & q
    \end{array}
\right)=0.
\label{Gram5-dbox}
\end{eqnarray}
For example, on the hepta-cut, the first Gram-matrix relation reads,
\begin{equation}
  0=4 (k \cdot p_4)^2 (q \cdot p_1)^2+2 s(k \cdot p_4)^2 (q \
\cdot p_1)+2 s (k \cdot p_4) (q \cdot p_1)^2-s t (k \cdot p_4) \
(q \cdot p_1) 
\label{eq:D7dboxG1}
\end{equation} 
which means that either $m\leq 2$ or $n\leq 2$.  Finally, the
integrand basis for the double box amplitude is,
\begin{align}
   \Delta^{\dbox}_{7;12*34*}(\vec{k},\vec{q}) &= 
   \sum_{mn \alpha \beta} c_{m n (\alpha+2\beta)} (k \cdot p_4)^m (q \cdot p_1)^n (k \cdot \w)^\alpha (q \cdot \w)^\beta.
\label{eq:D7dboxA}
\end{align}
There are $16$ non-spurious terms,
\begin{equation}
  (c_{000},c_{010},c_{100},c_{020},c_{110},c_{200},c_{030},c_{120},c_{210},c_{300},c_{040},c_{130},c_{310},c_{400},c_{140},c_{410}),
  \label{eq:dbox_ns}
\end{equation}
and 16 spurious terms,
\begin{equation}
  (
  c_{001},c_{011},c_{101},c_{111},c_{201},c_{211},c_{301},c_{311},
  c_{002},c_{012},c_{102},c_{022},c_{112},c_{032},c_{122},c_{132}
  ).
  \label{eq:dbox_s}
\end{equation}
We comment that, alternatively, (\ref{eq:D7dboxG1}) can be obtained
using the elimination method on the 7 cut equations. However, its computation is quite long and not systematic, comparing with the Gram-matrix method. In two-loop cases, the Gram-matrix method
provides a very efficient way to determine the basis.

Again, we can determine the $32$ coefficients from the hepta-cuts
solutions. The solutions are much more complicated
than one-loop cut solutions: there are $6$ solutions, and
each of which depends on a free parameter, $\tau$. From the Taylor or Laurent
expansion of the integrand at hepta-cuts, we can solve for the $32$
coefficients, in cases of Yang-Mills and $\mathcal N=1,2,4$
super-Yang-Mills theories. Then IBP relations can further reduce the $16$
non-spurious integrals to two master integrals. However, to get the
lower cut functions like the hexa-cut case, we have to subtract all the
$32$ terms first, not only the two master integrals. 

The four-point non-planar function
$\Delta^{\xbox}_{7;1*23*4*}$ has been determined by the same method
\cite{Badger:2012dp}. 

The Gram-matrix method becomes more complicated as we attempt to add
more loops and legs. Furthermore, it is not easy to automate the Gram-matrix method: once a Gram-matrix relation is found, we need to determine
  which monomial inside the relation should be removed from the
  basis. For diagrams with many legs or more than two loops, it is also complicated to classify all the cut
  solutions. Hence
  a new automatic algorithm is needed, to carry out integrand-level
reduction for higher-loop and many-leg amplitudes.

\section{The algorithm} 
We present an automatic algorithm for integrand-level basis determination for generalized
  unitarity, based on the techniques of computational algebraic
  geometry. The goal is (1) to find the integrand basis by Gr\"obner
  basis method (2) to classify all inequivalent unitary cut solutions by
  primary decomposition and find the dimension of each solution.

\subsection{Setup}
We parametrize the loop momenta using
scalar products. This is a variation of van Neerven-Vermaseren basis \cite{vanNeerven:1983vr}. This parameterization has the
following advantages:
\begin{itemize}
\item It does not depend on spinor helicity formalism or particular
  basis choices.   
\item The cut equations in terms of scalar products have a particularly
  simple form. This makes it convenient to carry out primary decomposition
  later. It is also easier to apply
  polynomial fitting techniques to reconstruct the integrand. 
\end{itemize}

Consider an $L$-loop $n$-point diagram. The dimension is $D=d$ or
$D=d-2\epsilon$. $d$ is an integer which stands for the dimension of
the physical spacetime, while $-2\epsilon$ is the number of extra
dimensions introduced by dimensional regularization. In
most examples, we consider $d=4$.     

Let $(l_1,\ldots l_L)$ be the
loop momenta and $(p_1, \ldots p_n)$ be the external momenta. We use
the scheme that all extra momenta and polarization vectors have no
extra-dimensional components. The momenta $p_j$
can be either massless or massive. 

We choose a basis $\{e_1,
  ..., e_d\}$ for the physical spacetime. Each $e_i$ is either an
  external momentum or an $\omega_j$, that is a momentum
  perpendicular to all the external legs. We define the
  $d\times d$ Gram matrix,
  \begin{equation}
    \label{eq:2}
    G_d=G\left(
      \begin{array}{ccc}
        e_1 , \ldots, e_d \\
 e_1 , \ldots, e_d \\
      \end{array}
\right).
  \end{equation}
$G_d$ is nonsingular, as it should be. 

For the case $D=d-2\epsilon$, we decompose the loop momenta into
physical and extra-dimensional components,
\begin{equation}
  l_i=l_i^{[d]}+l_i^{\perp},
\end{equation}
and we define $\mu_{ij}\equiv-l_i^\perp \cdot l_j^\perp$. For the case
$D=d$, we simply set $l_i^{\perp}=0$ and all the $\mu_{ij}$ are absent.

We parametrize $l_i^{[d]}$ using scalar products, 
$(l_i \cdot e_j)$, $1\leq j\leq D$,
\begin{equation}
  l_i^{[d]}=\left(e_1,
  ..., e_d\right)G_{d}^{-1}
\left(  
\begin{array}{c}
    (l_i \cdot e_1) \\
   \vdots \\
    (l_i \cdot e_d) 
  \end{array}
\right).
\label{eq:scalars-products-to-vector}
\end{equation}
We define the set of (fundamental-)scalar products (SPs) to be 
\begin{equation}
  \mathbb{SP}=\{(l_i \cdot e_j)|1 \leq i \leq L, 1\leq j\leq d\}\cup
  \{\mu_{ij}|1 \leq i \leq L, i\leq j\leq L\},\quad D=d-2\epsilon,
  \label{SPd}
\end{equation}
or 
\begin{equation}
  \mathbb{SP}=\{(l_i \cdot e_j)|1 \leq i \leq L, 1\leq j\leq d\}, \quad
  D=d.
\label{SPdm2e}
\end{equation}
All the other scalar products, like $(l_i\cdot u)$, $l_i^2$, $(l_i\cdot l_j)$,
where $u$ is a constant vector in the physical dimension, can be written as
polynomial functions of (fundamental-)scalar products, using the Gram matrix $G_d$. For
example,
\begin{eqnarray}
   l_i \cdot u=\left((l_i\cdot e_1),
  ..., (l_i \cdot e_d)\right)G_d^{-1}
\left(  
\begin{array}{c}
    (u \cdot e_1) \\
   \vdots \\
    (u \cdot e_d) 
  \end{array}
\right),\\
   l_i \cdot l_j=\left((l_i\cdot e_1),
  ..., (l_i \cdot e_d)\right)G_d^{-1}
\left(  
\begin{array}{c}
    (l_j \cdot e_1) \\
   \vdots \\
    (l_j \cdot e_d) 
  \end{array}
\right)-\mu_{ij}.
\end{eqnarray}

Next, we consider the $m$-fold unitarity cut of the amplitude, i.e., $m$
propagators are set to zero.
\begin{equation}
  \label{eq:23}
  D_1(l_1, \ldots l_L)=\ldots =D_m(l_1, \ldots l_L)=0,
\end{equation}
Using the Gram matrix $G_d$, these cut
equations can be expressed $m$ polynomial equations in the SPs. We
denote the {\it polynomial ring} of SPs, i.e. the collection of all
polynoimals in SPs, as $R'$. Then we
introduce the concept of an {\it ideal in a ring} \cite[Ch.~1]{MR2290010}: in general, an ideal
$J$ generated by
several polynomials $f_1,\ldots f_k$ in a ring $S$, is the subset of $S$,
\begin{equation}
  \label{eq:3}
  J= \langle f_1, \ldots f_k \rangle \equiv \{a_1 f_1+\ldots +a_k f_k|\forall
  a_i\in S, 1\leq i \leq k\},
\end{equation}
where $a_i$s are arbitrary polynomials in $S$. Here we define,
\begin{equation}
  \label{eq:22}
  I'=\langle D_1,\ldots D_m \rangle,
\end{equation}
which is the ideal generated by all the cut equations in terms of SPs.

Some scalar products' values are {\it uniquely} determined at {\it all} cut solutions, i.e., they are polynomials of
propagators,
\begin{equation}
  \label{eq:14}
  x = const +\mathcal O(D_1, \ldots, D_m).
\end{equation}
We may call these scalar products {\it reducible
  scalar products } (RSPs) and and all the other scalar products in
$\mathbb{SP}$ {\it
  irreducible scalar products} (ISPs). 

In practice, we can extend the definition
of RSPs. For example, if two scalar products $x_1$ and $x_2$, satisfy the
relation,
\begin{equation}
  \label{eq:15}
  \alpha_1 x_1+ \alpha_2 x_2 = const + \mathcal O(D_1, \ldots, D_m),
\end{equation}
where $\alpha_1$ and $\alpha_2$ are nonzero constant. We may pick up one of
the two scalar products as RSP, say $x_1$, and write it as a linear
function of $x_2$ on the multiple cut. 

Hence we have the following formal definition of  {\it reducible
  scalar products } (RSP) and and all the other scalar products {\it
  irreducible scalar products} (ISP):
\begin{definition}
The set $\mathbb{ISP}$ of  irreducible scalar products is a minimal subset of
$\mathbb{SP}$, such that all the other scalar products can be
expressed as linear functions in ISPs on the
unitarity cut. 
\end{definition}
This definition minimizes the number of ISPs, so the following
calculation will be simpler. The choice of $\mathbb{ISP}$ is not unique but different choices are
equivalent.  We have the following decomposition:
\begin{equation}
  \label{eq:4}
  \mathbb{SP}=\mathbb{RSP} \cup \mathbb{ISP}.
\end{equation}
To simplify notations, we
  label the ISPs by $x_1, \dots x_{n_{I}}$. 

We can eliminate all the
  RSPs from the cut equations to obtain a new set of algebraic
  equations $\mathbb F$ in ISPs. With an abuse of notations, 
\begin{equation}
  \mathbb{F}=\{D_k(x_1, \dots x_{n_{I}})=0| 1\leq k \leq m\},
\end{equation}
where $D_k$ is the polynomial in ISPs obtained from rewriting the
$k$-th propagators in
terms of ISPs, after RSPs are eliminated. 

We denote the polynomial ring of ISPs as $R$, and the ideal generated
by $D_\alpha$'s as $I$,
\begin{equation}
 I=\langle D_1,\ldots, D_m \rangle\subset R,
\label{Ideal:I}
\end{equation}
where $\langle ... \rangle$ stands for an ideal generated by several
polynomials.


It is easy to identify the RSPs and
ISPs by hand for one and two-loop
diagrams. 
However, this calculation becomes messy for more complicated
diagrams. In practice, the identification of the RSP and ISP can be
done quickly and systematically using Gr\"obner basis method, as
described in appendix B.

The algebraic equation system $\mathbb F$ in ISPs plays the central
role in our algorithm. We will see that it contains all the
information on cut solutions and the integrand basis.

\subsection{Algorithm for integrand basis determination}
In this section, we present an automatable algorithm for the 
determination of the integrand basis. 

From the previous section, we see that all Lorentz invariants can be
reduced to polynomials of (fundamental-)scalar products. Furthermore, RSPs can
be reduced to constants or linear functions of ISPs. Hence, schematically, on $m$-fold unitarity cuts of a $L$-loop amplitude, the numerator of
the integrand is reduced to
a polynomial of ISPs, like (\ref{eq:1lD4}), (\ref{eq:1lD3}), (\ref{eq:1lD42e}) and (\ref{eq:D7dboxA}),
\begin{equation}
  \Delta_m^{L\text{-loop}}= \sum_{(\alpha_1,\ldots, \alpha_{n_I})\in S}
  c_{\alpha_1 \ldots \alpha_{n_I}} x_1^{\alpha_1} \ldots x_{n_I}^{\alpha_{n_I}}.
\label{generalized unitarity}
\end{equation}
Here the tuple $(\alpha_1, \ldots \alpha_{n_I})$ groups together the non-negative
integer powers of the
ISPs. We further require that  $\Delta_m^{L\text{-loop}}$ have no dependence in the propagators $D_i$, $i=1,\ldots,
m$.  In previous examples, this was achieved by using cut equations directly for one-loop
topologies, or Gram-matrix method for two-loop topologies.  
The finite set $S$ contains all the power tuples for
the reduction. The constant coefficients $c_{\alpha_1\ldots \alpha_{n_I}}$ are independent of loop
momenta. They can be fitted from tree amplitudes by unitarity as in
previous examples.

The set $B$ of the monomials $x_1^{\alpha_1} \ldots
x_{n_I}^{\alpha_{n_I}}$ appearing in (\ref{generalized unitarity}), is
  called the {\it integrand basis}. The goal is to determine this
  basis, or equivalently, the finite set $S$.  We translate the
  requirement that $\Delta_m^{L\text{-loop}}$ have no dependence in the propagators $D_i$, $i=1,\ldots,
m$ into mathematical language, 
\begin{proposition}
  The monomials in the integrand basis must be linearly
  independent in the quotient ring $R/I$.
\end{proposition}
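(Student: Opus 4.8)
The plan is to translate the physical requirement---that $\Delta_m^{L\text{-loop}}$ contain no term proportional to a propagator---into the language of the ideal $I$, after which linear independence in $R/I$ follows almost immediately. The central dictionary is this: a polynomial $f\in R$ in the ISPs is \emph{propagator-dependent} (i.e.\ removable, since it cancels a denominator in $\Delta_m/(D_1\cdots D_m)$ and drops to a lower-propagator topology) precisely when it can be written as $f=\sum_{i=1}^m a_i D_i$ with $a_i\in R$, which by the definition (\ref{Ideal:I}) of $I$ is exactly the statement $f\in I$, i.e.\ $\bar f=0$ in the quotient ring $R/I$. Thus ``no propagator dependence'' of a numerator is synonymous with ``nonzero image in $R/I$'', and two numerators differing by an element of $I$ produce identical contributions at the $m$-propagator topology, differing only in lower topologies.

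First I would fix the base field $k$ of the constant coefficients $c_{\alpha_1\ldots\alpha_{n_{I}}}$, so that $R=k[x_1,\ldots,x_{n_{I}}]$ and $R/I$ are regarded as $k$-vector spaces; the coefficients in (\ref{generalized unitarity}) are $k$-linear, and the linear independence asserted in the proposition is $k$-linear independence of the images $\bar b$ of the basis monomials $b\in B$. Writing a generic numerator as $\Delta_m=\sum_{b\in B}c_b\,b$, its physical content on the cut is the class $\overline{\Delta_m}=\sum_{b\in B}c_b\,\bar b\in R/I$, since $R/I$ is the ring of functions on the cut equations $\mathbb F$.

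The argument is then by contradiction. Suppose the monomials of $B$ were linearly dependent in $R/I$: there would exist constants $\lambda_b\in k$, not all zero, with $\sum_{b\in B}\lambda_b\,\bar b=0$, i.e.\ $\sum_{b\in B}\lambda_b\,b\in I$. This combination is a nonzero polynomial in $R$ (the $b$ are distinct monomials), yet it lies in $I$ and is therefore propagator-dependent by the dictionary above. Consequently it may be added to $\Delta_m$ with any coefficient $t$ without altering the $m$-propagator integrand: the shift $\Delta_m\mapsto\Delta_m+t\sum_b\lambda_b\,b$ moves the coefficients $c_b\mapsto c_b+t\lambda_b$ while leaving $\overline{\Delta_m}$ fixed. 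Hence either the expansion (\ref{generalized unitarity}) fails to determine its coefficients uniquely from the cut data, or---equivalently---the combination $\sum_b\lambda_b\,b$ is a forbidden propagator-dependent term sitting inside the span of $B$. Both conclusions contradict the defining role of $B$ as a basis of genuinely propagator-independent numerators whose coefficients are fixed by unitarity. Therefore no such relation exists and the $\bar b$ are $k$-linearly independent in $R/I$.

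The one point needing care---and the step I expect to be the main obstacle---is the use of $I$ rather than its radical $\sqrt{I}$. One must argue that the \emph{removable} terms are exactly those literally of the form $\sum_i a_i D_i$ (hence in $I$), and not merely those vanishing set-theoretically on the cut variety $V(I)$ (which would be $\sqrt{I}$ by the Nullstellensatz). This is dictated by the physics: a term cancels a denominator only if it is an actual $R$-linear combination of the propagators $D_i$, so the quotient $R/I$---not $R/\sqrt{I}$---is the correct object in which to test independence. Once this identification of ``propagator-dependent'' with ``$\in I$'' is granted, the linear-independence statement is immediate.
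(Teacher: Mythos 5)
Your proof is correct and follows essentially the same route as the paper: assume a nontrivial linear relation $\sum_b\lambda_b\,b\in I$, observe that such a combination is an $R$-linear combination of the propagators $D_k$ and hence removable to lower-propagator topologies, and conclude that the basis would be redundant (the paper makes this explicit by eliminating one monomial and shrinking $S$ to $\tilde S$, whereas you phrase it as non-uniqueness of the coefficients $c_b$ under the shift $c_b\mapsto c_b+t\lambda_b$ — the same contradiction). Your closing remark that the relevant quotient is $R/I$ rather than $R/\sqrt{I}$ is a correct and worthwhile clarification that the paper leaves implicit.
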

\begin{proof}
  Otherwise, there exist constant coefficients $d_{\alpha_1\ldots
  \alpha_{n_I}}$, $(\alpha_1,\ldots, \alpha_{n_I})\in S$ such that 
\begin{equation}
  \sum_{(\alpha_1,\ldots, \alpha_{n_I})\in S}
  d_{\alpha_1 \ldots \alpha_{n_I}} x_1^{\alpha_1} \ldots
  x_{n_I}^{\alpha_{n_I}}=\sum_{k=1}^m f_k D_k
\end{equation}
where $f_k$'s are polynomials in ISPs. Suppose
that one coefficient $d_{\beta_1 \ldots \beta_{n_I}}$ is not zero. We
define a subset $\tilde S=S -\{(\beta_1 \ldots \beta_{n_I})\}$.
Then $\Delta_m^{L\text{-loop}}$  can be reduced even further,
\begin{gather}
\Delta_m^{L\text{-loop}}= \sum_{(\alpha_1,\ldots, \alpha_{n_I})\in S}
  c_{\alpha_1 \ldots \alpha_{n_I}} x_1^{\alpha_1} \ldots
  x_{n_I}^{\alpha_{n_I}}\nonumber \\
=\sum_{(\alpha_1,\ldots, \alpha_{n_I})\in \tilde S}
 \bigg( c_{\alpha_1 \ldots \alpha_{n_I}}-\frac{c_{\beta_1 \ldots
      \beta_{n_I}}}{d_{\beta_1 \ldots \beta_{n_I}}}d_{\alpha_1 \ldots \alpha_{n_I}} \bigg)  x_1^{\alpha_1} \ldots
  x_{n_I}^{\alpha_{n_I}} + \frac{c_{\beta_1 \ldots
      \beta_{n_I}}}{d_{\beta_1 \ldots \beta_{n_I}}} \sum_{k=1}^m f_k D_k
\label{eq: redundancy} 
\end{gather}
Thus $\Delta_m^{L\text{-loop}}$ still depends on $D_1, \ldots D_m$. We can
redefine the first term in (\ref{eq: redundancy}) as $\tilde
\Delta_m^{L\text{-loop}}$, $\tilde S$ as the new power set for the
basis, and move the second term in (\ref{eq: redundancy}) to
fewer-propagator integrals. The size of the basis decreases by one, so the reduction is not complete. 
\end{proof}

There is a classic method to find the complete linearly independent
basis in $R/I$: Buchberger's algorithm \cite[Ch.~5]{MR2290010}. (See
\cite[Ch.~2]{MR2290010} for a review of Gr\"obner basis.)
\begin{enumerate}
\item Define a
{\it monomial ordering} in $R$ and calculate the corresponding
Gr\"obner basis $G(I)$ of $I$. Denote $LT(K)$ as the collection
of leading terms of all polynomials in a set $K$, according to this
monomial ordering. 
\item Compute $LT(G(I))$, the leading terms of all polynomials in
  $G(I)$. Obtain $\langle LT(G(I)) \rangle$, the ideal generated by
  $LT(G(I))$. By the properties of Gr\"obner basis, $\langle
LT(I) \rangle=\langle LT(G(I)) \rangle$, where $\langle LT(I) \rangle$
is the ideal generated by all leading terms in $I$.
\item Then the linear basis of $R/I$ is $\hat B$, which is the set of
  all monomials which are {\it not} in  $\langle LT(I) \rangle$.
\begin{equation}
  \label{eq:7}
\hat B=  \{x_1^{\alpha_1} \ldots  x_{n_I}^{\alpha_{n_I}}  |
x_1^{\alpha_1} \ldots  x_{n_I}^{\alpha_{n_I}}\not \in \langle
LT(I) \rangle \}.
\end{equation}
\end{enumerate}

This method is fast. However, the basis generated by Buchberger's
method usually contains an infinite
number of terms, since the renormalizablity conditions have not been
imposed. We find that after the ring $R$ is reduced to $\hat B$, it is not
easy to impose renormalizablity conditions.
 
Hence, we propose the following alternative algorithm for basis determination,
based on multivariate synthetic division,
\begin{enumerate}
\item Define a
{\it monomial ordering} in $R$ and calculate the corresponding Gr\"obner basis $G(I)$ of
$I$. 
\item Generate the set $A$ of all monomials in ISPs, which satisfy
  renormalizablity conditions. $A$ must be a finite set.
\item For each monomial $a_j(x_1\ldots
  x_{n_I})$ in $A$, $1\leq j\leq |A|$,
  \begin{itemize}
    \item Carry out the multivariate synthetic division of $a_j$ by the
  Gr\"obner basis $G(I)$.
  \begin{equation}
    \label{eq:10}
    a_j(x_1\ldots
  x_{n_I})=g_j(x_1\ldots
  x_{n_I})+r_j (x_1\ldots
  x_{n_I}),\quad g_j(x_1\ldots
  x_{n_I}) \in I
  \end{equation}
where $r_j (x_1\ldots
  x_{n_I})$ is the remainder of multivariate synthetic division. Given
 the
  Gr\"obner basis $G(I)$, $r_j (x_1\ldots
  x_{n_I})$ is uniquely determined.
\item Decompose $r_j (x_1\ldots
  x_{n_I})$ as monomials and collect them in a set $B_j$.
     \end{itemize}
\item The integrand basis $B$ is then,
  \begin{equation}
    \label{eq:12}
    B=\bigcup_j B_j.
  \end{equation}
\end{enumerate}
The validity of this algorithm can be verified as follows,
\begin{itemize}
\item The monomials in $B$ are linearly independent in $R/I$. Multivariate synthetic division by
Gr\"obner basis ensures that all monomials in $B_j$ are not in $\langle
LT(I) \rangle$, therefore $B$ is a subset of $\hat B$. So by a
corollary of Buchberger's method, linear independence is proven. 
\item The basis $B$ is big enough for integrand-level reduction. From
  step 3, we see that every
  renormalizable term in the numerator of the integrand is reduced to
  monomials in $B$. In other words, it is a sum of a linear
  combination of monomials in $B$ and other terms vanishing on the
  unitarity cut.  
  
\end{itemize}
We implement this part of our algorithm in the Mathematica package BasisDet. The Gr\"obner basis
calculation and
multivariate synthetic division are done by the functions in
Mathematica. The monomial order is chosen as ``degree lexicographic'' (``deglex'' in
mathematica language, see \cite[Ch.~2]{MR2290010} for a review of monomial orderings.) and the
coefficient field can be chosen as rational functions for analytic
computation, or rational numbers for numeric computation.

\subsection{Primary decomposition of cut solutions}
Given the cut equations in ISP variables, or equivalently, the ideal $I$, the following
questions naturally arise:
\begin{itemize}
\item How many inequivalent cut solutions are there?
\item For each cut solution, how many free parameters are needed to
  parametrize it? In the other world, which is the dimension of each
  cut-solution?
\end{itemize}
These questions can be studied systematically using algebraic
geometry. We again translate these problems to mathematical
language. Consider the affine space $A^{n_I}=(x_1,\ldots
x_{n_I})$. The ideal $I$ defines an {\it affine
algebraic set}
$Y$ in $A^{n_I}$, \cite[Ch.~1]{MR0463157} 
\begin{eqnarray}
  \label{eq:26}
  Y\equiv \mathcal Z(I) =\{ (z_1,\ldots
z_{n_I}) | D_k(z_1,\ldots z_{n_I})=0,\quad \forall k\}   
\end{eqnarray}
which is the collection of all cut solutions in term of ISPs.

In general, $Y$ can always be decomposed uniquely to the union of a finite
number of irreducible components \cite[Ch.~1]{MR0463157},
\begin{equation}
  \label{eq:6}
  Y=\bigcup_{a=1}^{n_{\text{sol}}} Y_a, \quad Y_a \not \subset Y_b,\
  \text{if } a\not =b.
\end{equation}
where each $Y_a$ is an affine variety. Here, $n_{\text{sol}}$ is the number
of irreducible components. Different components are not related by parameter
redefinition. Each irreducible component corresponds to a cut
solution. So we have the following proposition:
\begin{proposition}
  The inequivalent cut solutions are in one-to-one correspondence with the irreducible components of the
algebraic set $Y$. In particular, the number of inequivalent cut
solutions equals the number of the irreducible components of the
algebraic set $Y$.
\end{proposition}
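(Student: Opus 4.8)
The plan is to identify the informal notion of an ``inequivalent cut solution'' with the rigorous notion of an irreducible component of $Y$, and then let the unique decomposition (\ref{eq:6}) do the counting. First I would make precise what a single cut solution is: physically it is a family of points on $Y$ obtained by letting the free parameters range over their domain, as in the parametrizations $k^{(s)}(\tau)$ appearing in the one- and two-loop examples. Since such a family is the image of an irreducible parameter space under a polynomial (or rational) map, its Zariski closure is an irreducible subvariety of $Y$. Thus every cut solution is carried by an irreducible subvariety, and conversely a single connected branch of $Y$ furnishes exactly one such solution.

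Next I would pin down the meaning of equivalence. Two cut solutions are equivalent when one is obtained from the other by a reparametrization of the free parameters. A reparametrization is an invertible change of the parameter variables, so it leaves the image point set unchanged: it merely relabels which value of the parameters produces which point of $Y$. Consequently two cut solutions are equivalent if and only if they trace out the same subvariety of $Y$, and inequivalent solutions correspond precisely to distinct subvarieties.

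With these identifications in place, the correspondence follows from the decomposition (\ref{eq:6}). Each irreducible component $Y_a$, being irreducible, cannot be written as a union of two proper closed pieces, so it represents a single cut solution rather than a superposition of several. Distinct components satisfy $Y_a \not\subset Y_b$ for $a\neq b$, so no reparametrization---which preserves the image---can map one onto the other; hence the components yield pairwise inequivalent solutions. Conversely, any cut solution is an irreducible subvariety of $Y$ and therefore lies inside some $Y_a$, with a complete solution filling out an entire component. This establishes the bijection between inequivalent cut solutions and irreducible components, and the count in (\ref{eq:6}) then gives $n_{\text{sol}}$.

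I expect the main obstacle to be the first step: turning the physicists' ``cut solution'' and ``equivalence by reparametrization'' into statements about varieties and morphisms sharp enough to make the claimed correspondence a theorem rather than a restatement of definitions. In particular, one must argue carefully that a reparametrization is genuinely image-preserving, so that equivalence classes are exactly the distinct image varieties, and that each irreducible component is filled out by a single parametrized family, so that irreducibility matches the intuitive notion of ``one solution.'' Once these are granted, invoking the uniqueness of the irreducible decomposition is routine.
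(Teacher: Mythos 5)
Your proposal is correct and follows essentially the same route as the paper, which in fact offers no formal proof: it simply invokes the unique decomposition of $Y$ into irreducible components and identifies each component with one cut solution, exactly as you do. Your added care in arguing that a reparametrization preserves the image variety---so that equivalence classes are precisely the distinct irreducible components---is a reasonable fleshing-out of what the paper leaves implicit (one small caveat: say ``irreducible component'' rather than ``connected branch,'' since a connected algebraic set can still be reducible).
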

 
This decomposition can be achieved easily by the algebraic method, {\it
  primary decomposition of an ideal} \cite[Ch.~1]{MR0463157}. Since $R$ (the polynomial
ring of ISPs) is a Noetherian ring,
the {\it primary decomposition} of $I$ uniquely exists (Lasker-Noether
theorem) \cite{MR0242802},
\begin{equation}
  \label{primary-decomposition}
  I=\bigcap_{a=1}^{s} I_a.
\end{equation}
where $s$ is a finite integer and each $I_a$ is a primary ideal. Furthermore, the
primary decomposition guaranteed that,
\begin{gather}  
  \sqrt{I_a}\not =\sqrt{I_b},\quad \text{if}\quad a\not=b.\\
  I_a \not \supset \bigcap_{b\not =a} I_b, \quad \forall a.
\label{eq:8}
\end{gather}
where $\sqrt{I_a}$ is the radical of $I_a$. \footnote{The radical of an ideal
$J$ is the set of all elements $a$, such that $a^n\in J$, where $n$ is
some positive integer. \cite[Ch.~4]{MR2290010}} Because $I_a$ is primary,
$\sqrt{I_a}$ is a prime ideal. 

Hence we have the corresponding decomposition of $Y$. Define $\mathcal
Z(I_a)$
to be the zero-locus (set of all solution points) of the ideal $I_a$, 
\begin{equation}
  \label{eq:9}
  Y=\mathcal Z(I)=\bigcup_{a=1}^{s}\mathcal Z(I_a)=\bigcup_{a=1}^{s}\mathcal Z(\sqrt{I_a}).
\end{equation}
Since $\sqrt{I_a}$ is prime, $\mathcal Z(I_a)=\mathcal Z(\sqrt{I_a})$
is an affine variety \cite[Ch.~1]{MR0463157}
which is irreducible. Then we define $n_{\text{sol}}=s$ and
$Y_a=\mathcal Z(\sqrt{I_a})$, and the decomposition is done.

The dimension of each component is given by dimension theory in
commutative algebra \cite[Ch.~1]{MR0463157},
\begin{equation}
  \label{dimension-theory}
  \dim Y_a=n_I-\text{height}(\sqrt{I_a}).
\end{equation}
where $\text{height}(\sqrt{I_a})$ is the height of the prime ideal $\sqrt{I_a}$,
which is defined to be the largest integer $N$, for all possible
series of prime
ideals $\langle 0\rangle= p_0 \subset p_1 \subset \ldots \subset
p_N=\sqrt{I_a}$. Recall that $\mathbb{SP}$ is the set of
(fundamental-)scalar products, which is defined in (\ref{SPd}). Note the $\dim Y_a$ may not equal $|\mathbb{SP}|-m$, the difference
between the number of (fundamental-)scalar products and the number of cut
equations, because of the
possible redundancy in the cut equations. Furthermore, for $a
\not =b$, $\dim Y_a$ may not equal $\dim Y_b$, since they are
independent components.

Once all the irreducible components are obtained, we can parametrize each
inequivalent solution. Together with the RSPs, the explicit form for
loop momenta $l_i$ at each cut solution can be recovered. 

 The primary decomposition (\ref{primary-decomposition}) and dimension 
 (\ref{dimension-theory}) can be calculated using
 computational algebraic geometry software, for example, by 
 standard built-in functions in Macaulay2
 \cite{M2} by
 Daniel Grayson and Michael Stillman. Alternatively, if we only need
 the number of irreducible components, then a numeric algebraic geometry approach could be applied, as described in \cite{Mehta:2012wk}.

\section{Examples}
We implemented the basis determination part of our algorithm in the
Mathematica package BasisDet. The only required inputs are the kinematic relations for
external legs, a list of propagators and the renormalization
conditions.  The output is the integrand basis. It also
provides $I$, as in (\ref{Ideal:I}), the ideal generated by the cut equations in terms of
ISPs. Then we can carry out the primary decomposition and dimension
theory calculation in the computational algebraic geometry program, Macaulay2, with
the ideal $I$
obtained from BasisDet. Here we list several
examples of application of our algorithm. All computations were done on a
laptop with an Intel core i7 CPU. 

\subsection{$D=4-2\epsilon$ one-loop four-point box topology}
Take $D=4-2\epsilon$, and consider the one-loop contribution with box
topology to four-point-all-massless amplitude. The BasisDet package takes $0.05$ seconds
to generate the basis in the analytic mode (see the appendix A for the
modes of the package), 
\begin{equation}
  \Delta_{4,i_1i_2i_3i_4}^{4-2\epsilon}=c_0+c_1 (k\cdot \omega)+c_2
  \mu^2 +c_3 (k\cdot \omega) \mu^2 +c_4 \mu^4.
\end{equation}
which is exactly the same basis as (\ref{eq:1lD42e}), which was obtained
in ref. \cite{Ellis:2011cr}. The package automatically find the two ISPs $(k\cdot \omega)$
and $\mu^2$. The cut equations, after all RSP are eliminated,
become one equation,
\begin{equation}
  \label{eq:11}
  4 (k\cdot \omega)^2 -\frac{4 t u}{s} \mu^2 -t^2=0,
\end{equation}
where $s,t,u$ are Mandelstam variables. It is clear that this equation defines an irreducible parabola
in the parameter space $(k\cdot \omega,\mu^2)$. So there is only
one solution, with the dimension $1$. As a trivial test, we can also see
that from primary decomposition. Let,
\begin{equation}
  \label{eq:17}
   I=\langle 4 (k\cdot \omega)^2 -\frac{4 t u}{s} \mu_1^2 -t^2 \rangle .
\end{equation}
Macaulay2 determines that $I$ itself is primary, so no
decomposition is needed.  It also automatically finds that $\dim
I=1$, which means there is one free parameter for the solution.

\subsection{Two-loop examples}
First, we consider $D=4$ four-massless-particle amplitude with
two-loop double-box topology (Figure. \ref{Figure:dbox}). The BasisDet package takes $0.95$ seconds
to generate the basis in the analytic mode, or $0.43$ seconds to
generate the same basis in the numeric mode.
\begin{align}
   \Delta^{\dbox}_{7;12*34*}(\vec{k},\vec{q}) &= 
   \sum_{mn \alpha \beta} c_{m n (\alpha+2\beta)} (k \cdot p_4)^m (q \cdot p_1)^n (k \cdot \w)^\alpha (q \cdot \w)^\beta.
\end{align}
with $16$ non-spurious terms,
\begin{equation}
  (c_{000},c_{010},c_{100},c_{020},c_{110},c_{200},c_{030},c_{120},c_{210},c_{300},c_{040},c_{130},c_{310},c_{400},c_{140},c_{410}),
\label{BASIS:dbox_s}
\end{equation}
which are exactly the same as (\ref{eq:dbox_ns}). There are also $16$
spurious terms,
\begin{equation}
  (
  c_{001},c_{011},c_{101},c_{201},c_{301},
  c_{002},c_{012},c_{022},c_{032},c_{102},c_{112},c_{122},c_{132},c_{202},c_{302},c_{402}
  ).
\label{BASIS:dbox_ns}
\end{equation}
Note that although the number of terms is the same as (\ref{eq:dbox_s}),
some terms are different from (\ref{eq:dbox_s}). It means we
get a different but equivalent integrand basis. 
Even for the one loop $D=4-2\epsilon$ box quadruple
cut, there are already several different choices of basis. We can
check explicitly
that the difference between (\ref{BASIS:dbox_ns}) and (\ref{eq:dbox_s})
is proportional to the seven propagators, so it does not change the
double-box contribution to the amplitude. 

There are four ISPs, $(l_1 \cdot p_4)$, $(l_2 \cdot p_1)$, $(l_1 \cdot
\omega)$ and $(l_2 \cdot \omega)$. The cut equations in ISPs read,
\begin{gather}
  \label{eq:1}
f_1\equiv-t^2+4 t (l_1\cdot p_4)-4 (l_1\cdot p_4)^2+4 (l_1\cdot \omega)^2=0,\\
f_2\equiv-t^2+4 t (l_2\cdot p_1)-4 (l_2\cdot p_1)^2+4 (l_2\cdot \omega)^2=0,\\
f_3\equiv s \left(-(l_1\cdot p_4)^2-2 (l_1\cdot p_4) (l_2\cdot p_1)+(l_1\cdot \omega)^2+2 (l_1\cdot \omega) (l_2\cdot \omega)-(l_2\cdot p_1)^2+(l_2\cdot \omega)^2\right)\nonumber\\-4 t (l_1\cdot p_4) (l_2\cdot p_1)=0.
\end{gather}
In this case, the ideal $I=\langle f_1, f_2, f_3 \rangle$ is quite
complicated. It is not easy to find the inequivalent solutions by hand
or by elementary analytic geometry. We use primary decomposition to find inequivalent
solutions automatically, for example, in Macaulay2, in just a couple
of seconds,
\begin{equation}
  \label{eq:27}
  I=\bigcap_{i=1}^6 I_i ,
\end{equation}
where $I_i$s are six primary ideals:
\begin{eqnarray}
  \label{eq:28}
  I_1&=&\langle l_1 \cdot p_4, 2(l_2\cdot \omega) - 2(l_2\cdot p_1) +t, 2(l_1\cdot \omega) - t \rangle \\
  I_2&=& \langle l_1 \cdot p_4, 2(l_2\cdot \omega) +2(l_2\cdot p_1) - t, 2(l_1\cdot \omega) + t \rangle \\
  I_3&=&\langle l_2\cdot p_1, 2(l_2\cdot \omega) + t, 2(l_1\cdot \omega) + 2(l_1 \cdot p_4) - t \rangle \\
  I_4&=& \langle l_2\cdot p_1, 2(l_2\cdot \omega) - t, 2(l_1\cdot \omega) - 2(l_1 \cdot p_4) + t \rangle \\
 I_5&=&\langle 2(l_2\cdot \omega) - 2(l_2\cdot p_1) + t, 2(l_1\cdot
 \omega) - 2(l_1 \cdot p_4) + t,\nonumber \\
 &&4(l_1 \cdot p_4)(l_2\cdot p_1) + 2(l_1 \cdot p_4)s +
  2(l_2\cdot p_1)s - st \rangle \\
  I_6&=& \langle 2(l_2\cdot \omega) + 2(l_2\cdot p_1) - t, 2(l_1\cdot
  \omega) + 2(l_1 \cdot p_4) - t, \nonumber \\&&4(l_1 \cdot p_4)(l_2\cdot p_1) + 2(l_1 \cdot p_4)s +
  2(l_2\cdot p_1)s - st \rangle .
\end{eqnarray}
So there are $6$ inequivalent unitarity cut solutions, consistent with
\cite{Kosower:2011ty, Mastrolia:2011pr}. Furthermore, Macaulay2
automatically finds that every solution of $I_i$ has dimension $1$. 

Note that all $I_i$'s
are generated by simple polynomials, so it is straightforward to solve
them for ISPs. Then using the Gram matrix relation,
(\ref{eq:scalars-products-to-vector}), we can rewrite the solutions in
terms of the loop momenta $l_1$ and $l_2$ and find the one-to-one
correspondence with the six solutions in ref. \cite{Kosower:2011ty}. However, this step is not
necessary since we can fit the coefficients $c_{m n (\alpha+2\beta)}$
directly from the solution for ISPs, as described in ref. \cite{Badger:2012dp}.



Similarly, we can apply the same method on other two-loop diagrams using
BasisDet and Macaulay2. Several examples are listed in Table \ref{Two-loop-examples}.
 \begin{table}[ht!]
\centering
\begin{tabular}{|c|c|c|c|c|c|c|}
\hline
&Diagram & \#ISP & $n_{NS}$ & $n_{S}$ & $n_{basis}$ & \#Solution\\
\hline
Box-triangle &\includegraphics[scale=0.2]{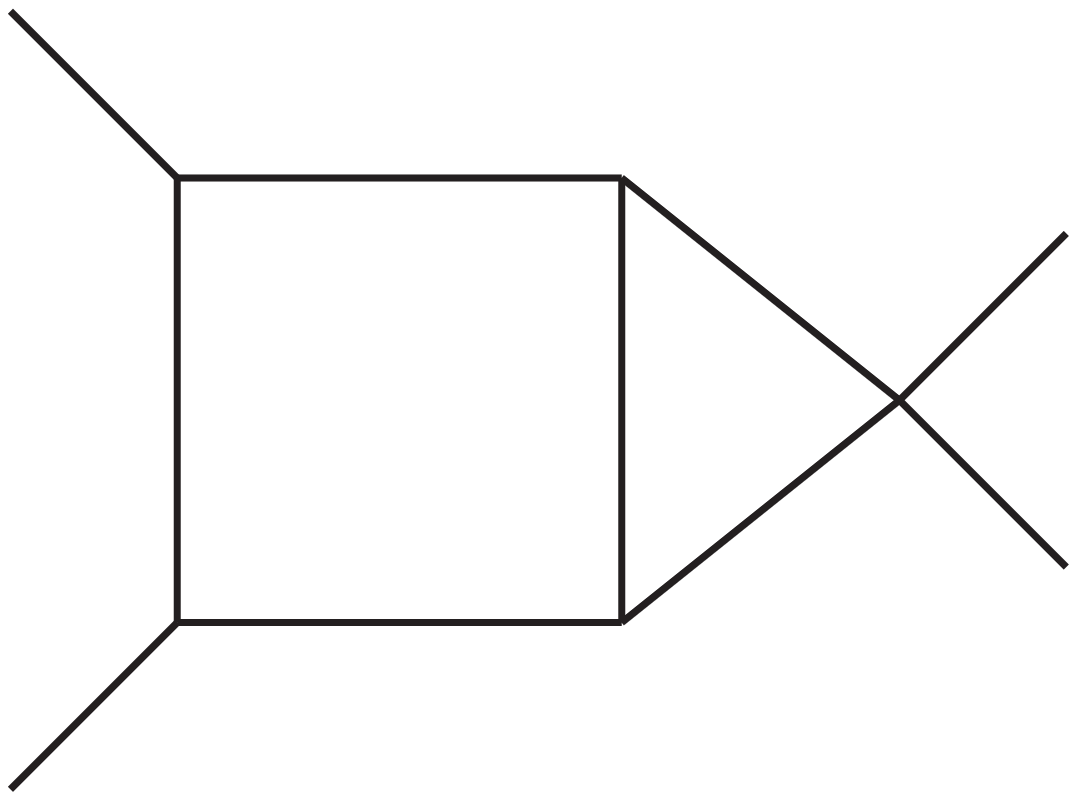}& 5 & 18 & 51 &
69 & 4\\
\hline
Five-point double-box &\includegraphics[scale=0.15]{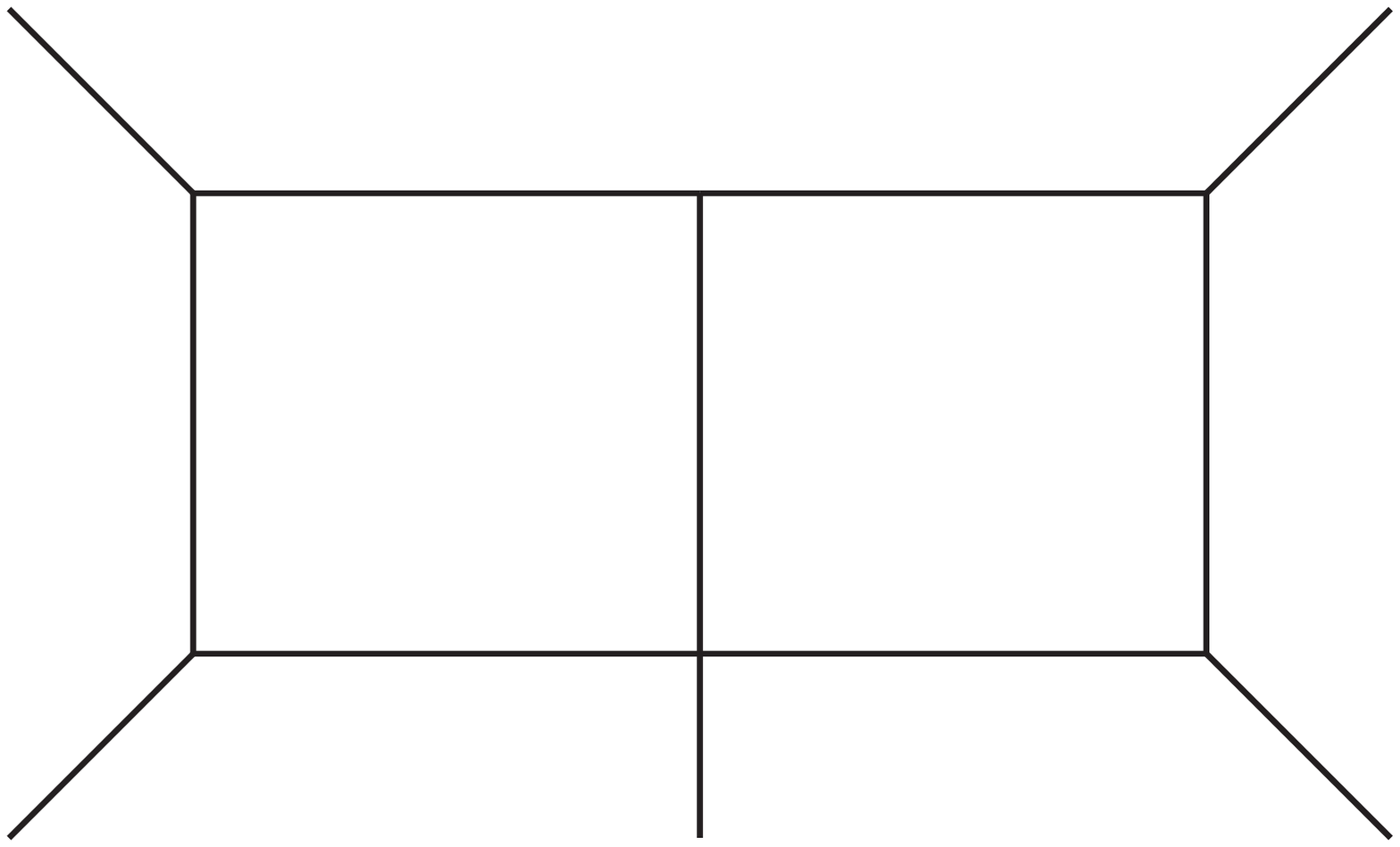}& 4 & 32 & 0 & 32 & 6\\
\hline
Sunset &\includegraphics[scale=0.15]{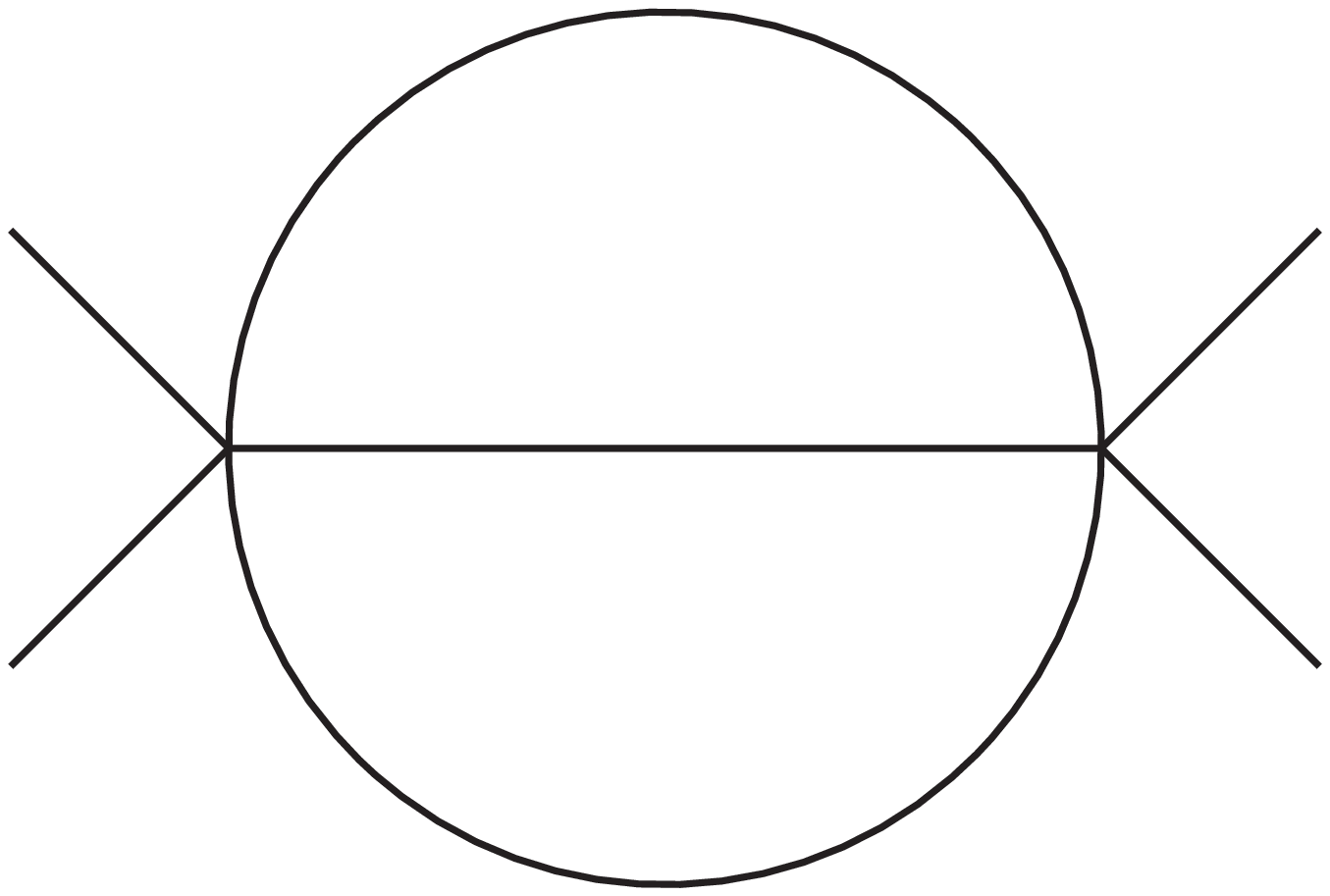}& 8 & 12 & 30 & 42 & 1\\
\hline
Double-bubble &\includegraphics[scale=0.15]{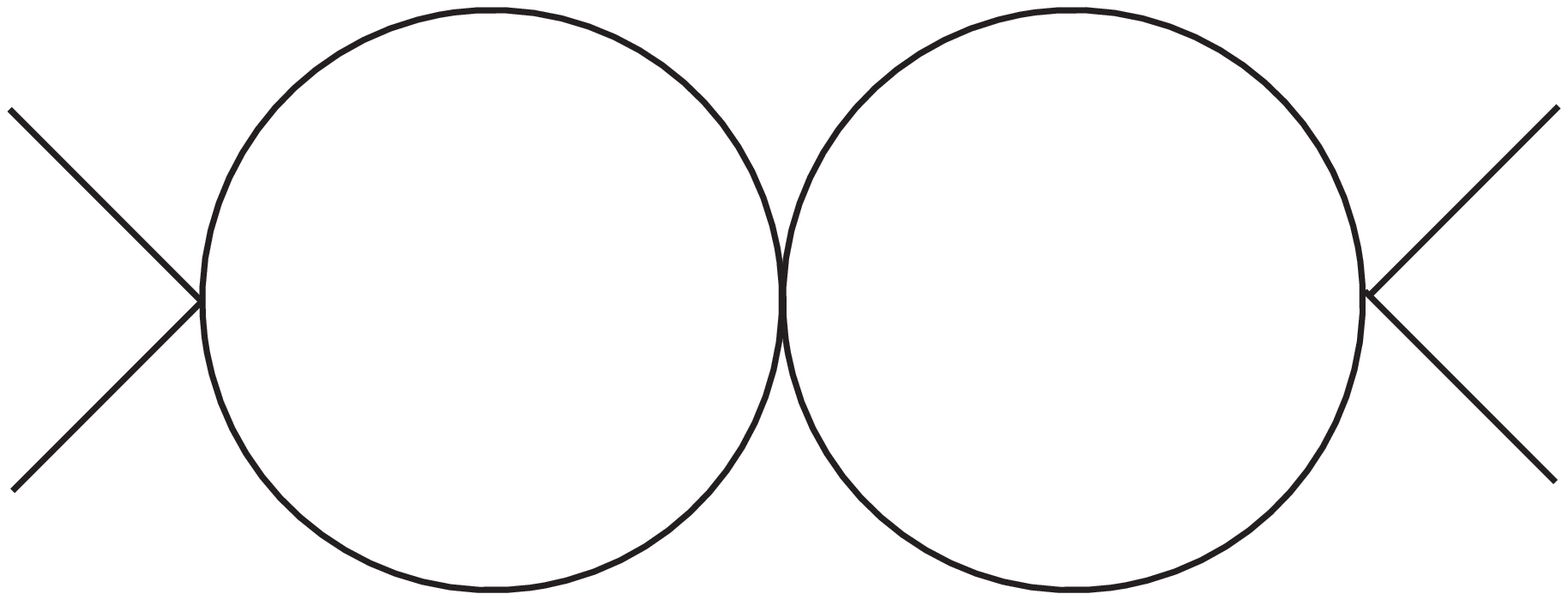}& 6 & 8 & 48 & 56 & 1\\
\hline
\end{tabular}
\caption{Several examples of the integrand-level reduction of $D=4$ two-loop diagrams: All external legs are
  massless. ``\#ISP'' is the number of ISPs. $n_{NS}$ and $n_{S}$ are the numbers of
non-spurious and  spurious terms in the integrand basis,
respectively. $n_{basis}=n_{NS}+n_{S}$ is the total number of
terms. ``\#Solution'' is the number of inequivalent solutions. The
explicit expression of the integrand basis can be obtained by running
the code ``example.nb'' with BasisDet.}
\label{Two-loop-examples}
\end{table}

\subsection{$D=4$ three-loop triple-box topology}
Consider $D=4$, four-massless-particle diagram with
three-loop triple-box topology (figure \ref{Figure:tribox}).
\begin{figure}[ht]
  \begin{center}
    \includegraphics[width=12cm]{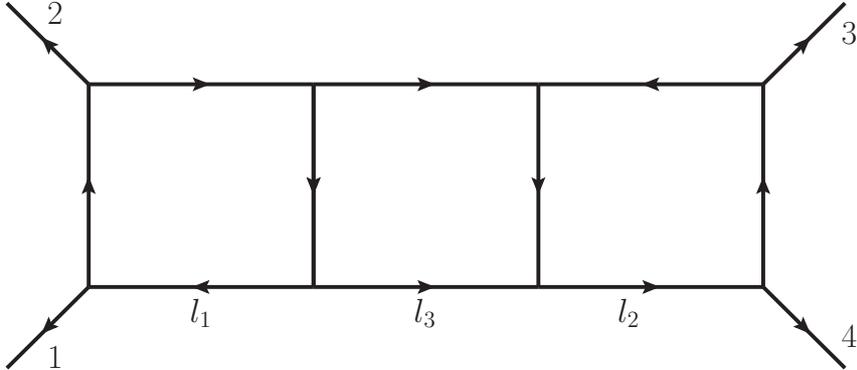}
  \end{center}
\caption{Four-point three-loop planar diagram}
\label{Figure:tribox}
\end{figure}
The
package uses about $42$ seconds in numeric mode or $4$ minutes in
analytic mode, to generate the same integrand basis. It contains
$199$ non-spurious terms and $199$ spurious terms.  

Furthermore we used Macaulay2 to find the inequivalent cut solutions by
primary decomposition. It takes about $2$ minutes to get, 
\begin{equation}
  \label{eq:29}
  I=\bigcap_{i=1}^{14} I_i,
\end{equation}
so there are $14$ inequivalent solutions. And,
\begin{equation}
  \label{eq:30}
  \dim I_i=2,\quad  i=1,\ldots,14 .
\end{equation}
Every solution thus depends on two free parameters. These solutions have been
verified both analytically and numerically. Furthermore, by the
explicit solutions, we can check
that the $398=199+199$ terms in the basis are linearly independent on
the unitarity cuts. This validates the basis.

With the integrand basis and all inequivalent solutions, we can
reconstruct the triple-box contribution to three-loop amplitude for any
renormalizable theory, via the polynomial fitting
techniques. 


\section{Conclusions and future directions}
In the paper, we have presented a new method for integrand-level
reduction, based on computational algebraic geometry. It applies (1)
a Gr\"obner basis to find the basis for integrand-level reduction, (2)
a primary decomposition of ideals to classify all inequivalent solutions
of unitarity cuts. The first part is realized in our Mathematica
package BasisDet, which automatically generates the basis from the
propagator information. This package also generates the ideal of the
cut equations, which can be used as the input of primary
decomposition. Then computational algebraic geometry software, like
Macaulay2 \cite{M2} can classify all inequivalent cut
solutions and determine the dimension of each solution.

Since this method has no dependence on the spacetime dimension, the
number of loops or the number of external legs, it works
for general multi-loop diagrams of renormalizable theories. 

We applied this method to many one-loop and two-loop
topologies. We have also used it to generate the correct basis and cut
solutions for the three-loop triple-box topology. This method
presented in this paper can be used to calculate many two-loop and
higher-loop amplitudes, via polynomial fitting techniques. 

In future, it would be interesting to work on the following directions, 
\begin{itemize}
\item Symmetries in the diagram. It is interesting to find a monomial
  ordering to keep symmetries in the diagram manifest, for the basis
  determination part of our algorithm. Then this  algorithm could be sped up considerably if all symmetries could
be made manifest.
\item Automatic parametrization of each cut solutions. We would
  like to find an automatic way of parametrizing each cut solution, after
  the dimension of each solution is obtained. It would be helpful for
  the polynomial fit process, to reconstruct the integrand from
  unitarity cuts.
\end{itemize}

\section*{Acknowledgements}
YZ would like to thank Simon Badger, Emil Bjerrum-Bohr, Hjalte Frellesvig and Valery Yundin for helpful discussions and
comments. Especially, YZ wants to express the gratitude to Simon Badger for his testing of the
package BasisDet, and his careful
reading of this paper at the draft stage. YZ also acknowledges Pierpaolo
Mastrolia for the communication with him on using Gr\"obner basis
methods, and his comments on the draft of this paper. YZ is supported by Danish
Council for Independent Research | Natural Science (FNU) grant 11-107241.

\appendix
\section{Manual for the package BasisDet 1.01}
This current code is powered by Mathematica with its embedded Gr\"obner
basis function. The latest version of the package is available on the website, \url{http://www.nbi.dk/~zhang/BasisDet.html}.
\subsection{Set up}
The main program is ``BasisDet-a-b.m'', where $a.b$ is the version
number of the package.  It should be executed as,
\begin{verbatim}
<<''/path/BasisDet-a-b.m''
\end{verbatim}
where ``/path/'' is the path for the package. 

\subsection{Input for loop diagrams}
The following variables need to be defined for the basis
determination,
\begin{itemize}
\item  \texttt{L}. It is the number of loops in the diagram.
\item \texttt{Dim}. It is the dimension of the spacetime, which should be $d$ or
  $d-2\epsilon$. $d$ is a positive integer and in most cases $d=4$.
\item \texttt{n}. It is the number of external legs.
\item \texttt{ExternalMomentaBasis}. It is a list of external momenta {\it in
  the basis  for physical spacetime}. Note that for $n$-point amplitude, because  of
  momentum conservation, we can pick up at most $n-1$ external momenta
  for the basis. In summary,
  \begin{itemize}
  \item If $n<d+1$, we need to put $n-1$ external momenta in
    ``ExternalMomentaBasis''. The program will automatically name the $d-n+1$ spurious
    vectors as $\omega_1,\ldots,\omega_{d-n+1}$.
  \item If $n\geq d+1$, we need to put $d$ external momenta in 
``ExternalMomentaBasis''.
  \end{itemize}
\item \texttt{Kinematics}. This is the list for replacement rules, from the
  kinematics. Note that only the scalar products of {\it vectors in
    the basis} need to be defined. To ensure that an kinematic
  constraints are resolved, only the independent set of
  $s_{ij}$, $s_{ij}=2 p_i \cdot p_j$, can appear in this list. For
  example, for a four-point diagram, we can only use two variables of
  the three Mandelstam variables. 
\item \texttt{numeric}. It is an optional variable for the basis
  calculation. When ``numeric'' is given and the numerical
  calculation in the GenerateBasis function is enabled, all the
  Gr\"obner basis calculation is done numerically. It will speed up
  the computation by $2\sim 5$ times. However, the numeric calculation
  has the risk of meeting kinematic singularities (like infrared limit
  and collinear limit). The numeric values should be
  rational numbers, otherwise the result depends on the floating-point
  tolerance inside the Gr\"obner basis computation. 

\item \texttt{Props}. This is the list for the propagator momenta. No specific 
  order for the propagators is necessary. The direction of the propagator
  momenta is also irrelevant.  In this version, the propagators are set
  to be {\it massless. }

\item \texttt{RenormalizationCondition}. This variable
  define the constraints from the renormalizablity condition. Each constraint on the power of the loop
 momenta is expressed as a linear inequality. For example, when
 $L=3$, the loop momenta are $l_1, l_2, l_3$ and the corresponding
 powers for the loop momenta are $\alpha_1, \alpha_2, \alpha_3$. The
 constraint 
 \begin{equation}
   \label{eq:13}
   \alpha_1+\alpha_2 \leq 6,
 \end{equation}
should be given as an item $\{\{1,1,0\},6\}$ in
``\texttt{RenormalizationCondition}''. $\{1,1,0\}$ is the list of the
coefficients of $\alpha_1, \alpha_2, \alpha_3$ and $6$ is the upper
bound.
\end{itemize}
The program will name the (fundamental-) scalar products $(l_i \cdot
e_j)$ as ``xij''.

\subsection{Computation and the output}
Once the input is given, all the basis determination computation is
done by one command GenerateBasis:
\begin{itemize}
\item \texttt{GenerateBasis[1]}. This calculates the basis analytically.
\item \texttt{GenerateBasis[0]}. This calculates the basis with numeric coefficients.
\end{itemize}
The outputs are stored in the following variables,
\begin{itemize}
\item \texttt{ISP}. This is the list for irreducible scalar products. 
\item \texttt{RSPSolution}. This is the solutions for reducible scalar products at
  the unitarity cut.
\item \texttt{CutEqnISP}. This is the list for the cut equations in terms of ISPs,
  after all RSPs are eliminated. Depending on if the numeric mode is
  enabled, the coefficient can be either numeric or analytic.
\item \texttt{Basis}. This is the list for the terms in the basis. The output
  form for each term is $(\alpha_1, \ldots \alpha_{n_I})$, where $\alpha_i$ is the power of the $i$-th
  ISP.
\item \texttt{SpuriousBasis}. This is the subset of the basis which contains
  all spurious terms.
\item \texttt{NSpuriousBasis}. This is the subset of the basis which contains
  all non-spurious terms.
\item \texttt{Integrand}. This is the integrand after the reduction, which is
  an expansion over the integrand-level basis. $cc[\alpha_1, \ldots,
  \alpha_{n_I}]$ stands for the coefficient $c_{\alpha_1, \ldots
    \alpha_{n_I}}$ for the term $x_1^{\alpha_1}\ldots
  x_{n_I}^{\alpha_{n_I}}$, as described in (\ref{generalized unitarity}).
\item \texttt{Gr}. This is the Gr\"obner basis for the ideal $I$, generated by
  cut equations in ISPs. It can be exported for other purposes.
\end{itemize}
The lists ``\texttt{ISP}'' and ``\texttt{CutEqnISP}'' can be readily used for primary
decomposition and then the dimension theory part of our algorithm, in
softwares like Macaulay2 \cite{M2}.

\subsection{Example, integrand basis for two-loop double-box diagram}
\begin{verbatim}
<< "/path/Basis-050712.m"
L=2;
Dim=4;
ExternalMomentaBasis={p1,p2,p4};
Kinematics={p1^2->0,p2^2->0,p4^2->0,p1 p2->s/2,p1 p4->t/2,
[p2 p4->-(s+t)/2,\[Omega]1^2->-t(s+t)/s};
numeric={s->11,t->3};
Props={l1-p1,l1,l1-p1-p2,l2-p3-p4,l2,l2-p4,l1+l2};
RenormalizationLoopMomenta={{1,0},{0,1},{1,1}};
RenormalizationPower={4,4,6} 
GenerateBasis[1] 
\end{verbatim}
It takes about $0.95$ second to generate the basis, with analytic
calculation. A typical output is,
\begin{verbatim}
Physical spacetime basis is {p1,p2,p4,\[Omega]1}
Number of irreducible scalar products: 4
Irreducible Scalar Products:{x14,x24,x13,x21}
Cut equations for ISP are listed in the variable 'CutEqnISP'
Possible renormalizable terms: 160
The basis contains 32 terms, which are listed in the variable 'Basis'
The explicit form of the integrand is listed in the variable 'Integrand'
Number of spurious terms: 16 , listed in the variable 'SpuriousBasis'
Number of non-spurious terms: 16, listed in the variable 'NSpuriousBasis'
Time used: 0.955934 seconds
\end{verbatim}
The we can obtain the basis information from the variables,
``CutEqnISP",  ``Basis", ``Integrand", ``SpuriousBasis" and
``NSpuriousBasis".

More examples are included in the Mathematica notebook, ``examples.nb".

\section{The algorithm of identifying the ISPs}
We have the following simple algorithm to find the ISPs, which is embedded in the package BasisDet,
\begin{itemize}
\item Calculate the Gr\"obner basis $G(I')$ for the ideal $I'$ generated by  {\it cut equations} in
terms of SPs, in the polynomial order of ``deglex''. 
\item Obtain $LT( G(I'))$, the set of the leading terms in $G(I')$. The
  linear terms in $LT(G(I'))$ are the RSPs. 
\end{itemize}

 It is easy to show that this algorithm gives the correct ISPs
 according to the definition. 
 \begin{proof}
   Suppose that this algorithm generates $\{y_1, \ldots
 , y_{n_R}\}$ as the list of the RSPs in the polynomial
 ordering, while $\{x_1, \ldots
 , x_{n_I}\}$ as the list of the ISPs in the polynomial
 ordering. First, we can prove that $y_{n_R}$ is a linear function of ISPs on the cut. $G(I')$ must
contain a linear polynomial, 
\begin{equation}
\alpha y_{n_R}+\sum_i^{n_i} \beta_i x_i + \gamma \in I'
\end{equation}
where $\alpha$, $b_i$ and $\gamma$ are constants and $\alpha\not=0$. This polynomial cannot
contain other $y_j$'s, because $y_j\succ y_{n_R}$ for $j<n_R$. Here ``$\succ$''
stands for the given monomial ordering. Thus
$y_{n_R}$ is a linear function of the ISPs on the unitarity cut.

Second, by induction, all $y_j$ are linear functions of ISPs on the cut.

Third, we can prove that the ISP set is minimal. If some $x_i$ can be
represented by a linear function of other ISPs at the cut, then
\begin{equation}
x_i -\sum_{j\not =i} \alpha_j x_j +\beta  \in I'
\end{equation}
Then the leading term of this polynomial is an ISP, say $x_k$. By the
property of Gr\"obner basis, 
$\langle LT(I') \rangle=\langle LT(G(I'))
\rangle$. Because $x_k\in LT(I')$, $x_k\in \langle LT(G(I')) \rangle$. Furthermore,
since $x_k$ has degree one, it is
generated by degree-one monomials in $LT(G(I'))$: $\{y_1, \ldots, y_{n_R}\}$. 
\begin{equation}
  \label{eq:16}
  x_k=\sum_i \gamma_i y_i
\end{equation}
while $\gamma_i$'s are constants. This contradicts the assumption of ring
structure. The ISP set is thus minimal. 
\end{proof}

\bibliographystyle{JHEP}
\bibliography{Basis-determination}

\providecommand{\href}[2]{#2}\begingroup\raggedright\begin{thebibliography}{10}

\bibitem{Chetyrkin:1981qh}
K.~Chetyrkin and F.~Tkachov, {\it {Integration by Parts: The Algorithm to
  Calculate beta Functions in 4 Loops}},  {\em Nucl.Phys.} {\bf B192} (1981)
  159--204.

\bibitem{Laporta:2001dd}
S.~Laporta, {\it {High precision calculation of multiloop Feynman integrals by
  difference equations}},  {\em Int.J.Mod.Phys.} {\bf A15} (2000) 5087--5159,
  [\href{http://xxx.lanl.gov/abs/hep-ph/0102033}{{\tt hep-ph/0102033}}].

\bibitem{Tarasov:2004ks}
O.~Tarasov, {\it {Computation of Grobner bases for two loop propagator type
  integrals}},  {\em Nucl.Instrum.Meth.} {\bf A534} (2004) 293--298,
  [\href{http://xxx.lanl.gov/abs/hep-ph/0403253}{{\tt hep-ph/0403253}}].

\bibitem{Smirnov:2005ky}
A.~Smirnov and V.~A. Smirnov, {\it {Applying Grobner bases to solve reduction
  problems for Feynman integrals}},  {\em JHEP} {\bf 0601} (2006) 001,
  [\href{http://xxx.lanl.gov/abs/hep-lat/0509187}{{\tt hep-lat/0509187}}].

\bibitem{Smirnov:2008iw}
A.~Smirnov, {\it {Algorithm FIRE -- Feynman Integral REduction}},  {\em JHEP}
  {\bf 0810} (2008) 107, [\href{http://xxx.lanl.gov/abs/0807.3243}{{\tt
  arXiv:0807.3243}}].

\bibitem{Bern:1994zx}
Z.~Bern, L.~J. Dixon, D.~C. Dunbar, and D.~A. Kosower, {\it {One loop n point
  gauge theory amplitudes, unitarity and collinear limits}},  {\em Nucl.Phys.}
  {\bf B425} (1994) 217--260,
  [\href{http://xxx.lanl.gov/abs/hep-ph/9403226}{{\tt hep-ph/9403226}}].

\bibitem{Bern:1994cg}
Z.~Bern, L.~J. Dixon, D.~C. Dunbar, and D.~A. Kosower, {\it {Fusing gauge
  theory tree amplitudes into loop amplitudes}},  {\em Nucl.Phys.} {\bf B435}
  (1995) 59--101, [\href{http://xxx.lanl.gov/abs/hep-ph/9409265}{{\tt
  hep-ph/9409265}}].

\bibitem{Britto:2004nc}
R.~Britto, F.~Cachazo, and B.~Feng, {\it {Generalized unitarity and one-loop
  amplitudes in N=4 super-Yang-Mills}},  {\em Nucl.Phys.} {\bf B725} (2005)
  275--305, [\href{http://xxx.lanl.gov/abs/hep-th/0412103}{{\tt
  hep-th/0412103}}].

\bibitem{Ossola:2006us}
G.~Ossola, C.~G. Papadopoulos, and R.~Pittau, {\it {Reducing full one-loop
  amplitudes to scalar integrals at the integrand level}},  {\em Nucl.Phys.}
  {\bf B763} (2007) 147--169,
  [\href{http://xxx.lanl.gov/abs/hep-ph/0609007}{{\tt hep-ph/0609007}}].

\bibitem{Ossola:2007ax}
G.~Ossola, C.~G. Papadopoulos, and R.~Pittau, {\it {CutTools: A Program
  implementing the OPP reduction method to compute one-loop amplitudes}},  {\em
  JHEP} {\bf 0803} (2008) 042, [\href{http://xxx.lanl.gov/abs/0711.3596}{{\tt
  arXiv:0711.3596}}].

\bibitem{Ellis:2007br}
R.~Ellis, W.~Giele, and Z.~Kunszt, {\it {A Numerical Unitarity Formalism for
  Evaluating One-Loop Amplitudes}},  {\em JHEP} {\bf 0803} (2008) 003,
  [\href{http://xxx.lanl.gov/abs/0708.2398}{{\tt arXiv:0708.2398}}].

\bibitem{Mastrolia:2010nb}
P.~Mastrolia, G.~Ossola, T.~Reiter, and F.~Tramontano, {\it {Scattering
  AMplitudes from Unitarity-based Reduction Algorithm at the Integrand-level}},
   {\em JHEP} {\bf 1008} (2010) 080,
  [\href{http://xxx.lanl.gov/abs/1006.0710}{{\tt arXiv:1006.0710}}].

\bibitem{Badger:2010nx}
S.~Badger, B.~Biedermann, and P.~Uwer, {\it {NGluon: A Package to Calculate
  One-loop Multi-gluon Amplitudes}},  {\em Comput.Phys.Commun.} {\bf 182}
  (2011) 1674--1692, [\href{http://xxx.lanl.gov/abs/1011.2900}{{\tt
  arXiv:1011.2900}}].

\bibitem{Hirschi:2011pa}
V.~Hirschi, R.~Frederix, S.~Frixione, M.~V. Garzelli, F.~Maltoni, {\em
  et.~al.}, {\it {Automation of one-loop QCD corrections}},  {\em JHEP} {\bf
  1105} (2011) 044, [\href{http://xxx.lanl.gov/abs/1103.0621}{{\tt
  arXiv:1103.0621}}].

\bibitem{Bevilacqua:2011xh}
G.~Bevilacqua, M.~Czakon, M.~Garzelli, A.~van Hameren, A.~Kardos, {\em
  et.~al.}, {\it {HELAC-NLO}},  \href{http://xxx.lanl.gov/abs/1110.1499}{{\tt
  arXiv:1110.1499}}.

\bibitem{Giele:2008ve}
W.~T. Giele, Z.~Kunszt, and K.~Melnikov, {\it {Full one-loop amplitudes from
  tree amplitudes}},  {\em JHEP} {\bf 0804} (2008) 049,
  [\href{http://xxx.lanl.gov/abs/0801.2237}{{\tt arXiv:0801.2237}}].

\bibitem{Ellis:2008ir}
R.~K. Ellis, W.~T. Giele, Z.~Kunszt, and K.~Melnikov, {\it {Masses, fermions
  and generalized $D$-dimensional unitarity}},  {\em Nucl.Phys.} {\bf B822}
  (2009) 270--282, [\href{http://xxx.lanl.gov/abs/0806.3467}{{\tt
  arXiv:0806.3467}}].

\bibitem{Ellis:2011cr}
R.~Ellis, Z.~Kunszt, K.~Melnikov, and G.~Zanderighi, {\it {One-loop
  calculations in quantum field theory: from Feynman diagrams to unitarity
  cuts}},  \href{http://xxx.lanl.gov/abs/1105.4319}{{\tt arXiv:1105.4319}}.

\bibitem{Mastrolia:2012bu}
P.~Mastrolia, E.~Mirabella, and T.~Peraro, {\it {Integrand reduction of
  one-loop scattering amplitudes through Laurent series expansion}},
  \href{http://xxx.lanl.gov/abs/1203.0291}{{\tt arXiv:1203.0291}}.

\bibitem{Buchbinder:2005wp}
E.~I. Buchbinder and F.~Cachazo, {\it {Two-loop amplitudes of gluons and
  octa-cuts in N=4 super Yang-Mills}},  {\em JHEP} {\bf 0511} (2005) 036,
  [\href{http://xxx.lanl.gov/abs/hep-th/0506126}{{\tt hep-th/0506126}}].

\bibitem{Gluza:2010ws}
J.~Gluza, K.~Kajda, and D.~A. Kosower, {\it {Towards a Basis for Planar
  Two-Loop Integrals}},  {\em Phys.Rev.} {\bf D83} (2011) 045012,
  [\href{http://xxx.lanl.gov/abs/1009.0472}{{\tt arXiv:1009.0472}}].

\bibitem{Schabinger:2011dz}
R.~M. Schabinger, {\it {A New Algorithm For The Generation Of
  Unitarity-Compatible Integration By Parts Relations}},  {\em JHEP} {\bf 1201}
  (2012) 077, [\href{http://xxx.lanl.gov/abs/1111.4220}{{\tt
  arXiv:1111.4220}}].

\bibitem{Kosower:2011ty}
D.~A. Kosower and K.~J. Larsen, {\it {Maximal Unitarity at Two Loops}},  {\em
  Phys.Rev.} {\bf D85} (2012) 045017,
  [\href{http://xxx.lanl.gov/abs/1108.1180}{{\tt arXiv:1108.1180}}]. 42 pages,
  9 figures.

\bibitem{Larsen:2012sx}
K.~J. Larsen, {\it {Global Poles of the Two-Loop Six-Point N=4 SYM integrand}},
   \href{http://xxx.lanl.gov/abs/1205.0297}{{\tt arXiv:1205.0297}}.

\bibitem{CaronHuot:2012ab}
S.~Caron-Huot and K.~J. Larsen, {\it {Uniqueness of two-loop master contours}},
   \href{http://xxx.lanl.gov/abs/1205.0801}{{\tt arXiv:1205.0801}}.

\bibitem{Mastrolia:2011pr}
P.~Mastrolia and G.~Ossola, {\it {On the Integrand-Reduction Method for
  Two-Loop Scattering Amplitudes}},  {\em JHEP} {\bf 1111} (2011) 014,
  [\href{http://xxx.lanl.gov/abs/1107.6041}{{\tt arXiv:1107.6041}}].

\bibitem{Badger:2012dp}
S.~Badger, H.~Frellesvig, and Y.~Zhang, {\it {Hepta-Cuts of Two-Loop Scattering
  Amplitudes}},  {\em JHEP} {\bf 1204} (2012) 055,
  [\href{http://xxx.lanl.gov/abs/1202.2019}{{\tt arXiv:1202.2019}}].

\bibitem{MR2290010}
D.~Cox, J.~Little, and D.~O'Shea, {\em Ideals, varieties, and algorithms}.
\newblock Undergraduate Texts in Mathematics. Springer, New York, third~ed.,
  2007.
\newblock An introduction to computational algebraic geometry and commutative
  algebra.

\bibitem{MR0463157}
R.~Hartshorne, {\em Algebraic geometry}.
\newblock Springer-Verlag, New York, 1977.
\newblock Graduate Texts in Mathematics, No. 52.

\bibitem{M2}
D.~R. Grayson and M.~E. Stillman, ``Macaulay2, a software system for research
  in algebraic geometry.'' Available at http://www.math.uiuc.edu/Macaulay2/.

\bibitem{vanNeerven:1983vr}
W.~van Neerven and J.~Vermaseren, {\it {LARGE LOOP INTEGRALS}},  {\em
  Phys.Lett.} {\bf B137} (1984) 241.

\bibitem{MR0242802}
M.~F. Atiyah and I.~G. Macdonald, {\em Introduction to commutative algebra}.
\newblock Addison-Wesley Publishing Co., Reading, Mass.-London-Don Mills, Ont.,
  1969.

\bibitem{Mehta:2012wk}
D.~Mehta, Y.-H. He, and J.~D. Hauenstein, {\it {Numerical Algebraic Geometry: A
  New Perspective on String and Gauge Theories}},
  \href{http://xxx.lanl.gov/abs/1203.4235}{{\tt arXiv:1203.4235}}.

\end{thebibliography}\endgroup
\end{document}